\def\BState{\State\hskip-\ALG@thistlm}
\newcommand{\E}[1]{\mathbb{E} \left\{#1 \right\}}
\renewcommand{\P}[1]{\mathbb{P}\left\{ #1 \right\} }
\newcommand{\I}[1]{\mathds{1}\left\{ #1 \right\}}
\newcommand{\Pw}[2]{\mathbb{P}_{#1}\left\{ #2 \right\} }
\newtheorem{definition}{Definition}
\newtheorem{theorem}{Theorem}
\newtheorem{lemma}{Lemma}
\newtheorem{proposition}{Proposition}
\DeclareMathOperator*{\esssup}{ess\,sup}
\DeclareMathOperator*{\essinf}{ess\,inf}
\pgfplotsset{compat = newest}
\newcommand{%
  \tikzsetnextfilename{}%
  \filemodCmp{.tikz}{external/.pdf}%
    {\tikzset{external/remake next}}{}%
  \input{.tikz}%
}[2]{%
  \tikzsetnextfilename{#2}%
  \filemodCmp{#1#2.tikz}{#1external/#2.pdf}%
    {\tikzset{external/remake next}}{}%
  \input{#1#2.tikz}%
}
\begin{document}
\sffamily
\title{Social diversity for reducing the impact\\ of information cascades on social learning}
\author{Fernando Rosas$^{1,2}$, Kwang-Cheng Chen$^{3}$ and Deniz G\"{u}nd\"{u}z$^{2}$\vspace{0.2cm}\\
$^1$ \small{Centre of Complexity Science and Department of Mathematics, Imperial College London, UK} \\
$^2$ \small{Department of Electrical and Electronic Engineering, Imperial College London, UK}\\%
$^3$ \small{Department of Electrical Engineering, University of South Florida, USA}\\
}
\date{}
\maketitle

\begin{abstract}
\noindent 
Collective behavior in online social media and networks is known to be capable of generating non-intuitive dynamics associated with crowd wisdom and herd behaviour. Even though these topics have been well-studied in social science, the explosive growth of Internet computing and e-commerce makes urgent to understand their effects within the digital society. In this work we explore how the stochasticity introduced by social diversity can help agents involved in a inference process to improve their collective performance. Our results show how social diversity can reduce the undesirable effects of information cascades, in which rational agents choose to ignore personal knowledge in order to follow a predominant social behaviour. Situations where social diversity is never desirable are also distinguished, and consequences of these findings for engineering and social scenarios are discussed.
\end{abstract}


\section{Introduction}
\label{intro}

The high interconnectedness enabled by communication technologies and online media is progressively increasing the complexity of our aggregated social behaviour~\cite{bar2002complexity}. In fact, these complex dynamics were dramatically illustrated by the failure of our prediction tools in the forecast of recent political events, including the Brexit referendum and the latest US presidential election. A key open challenge is to clarify how the large amount of information that is constantly exchanged among individuals affects their decisions.

Fascinating dynamics take place when social agents engage in sequencial decision-making. For example, most people nowadays use the Internet to check other people's recommendations prior to make decisions, which enable more informed decisions thanks to the inclusion of evidence from previous experiences. Subsequent decisions are, however, heavily influenced by earlier agents, allowing misinformation or fake news to be reinforced and spread across the social network. These non-trivial \textit{social learning} dynamics are known to play a critical role in a number of key social phenomena, e.g., in the adoption or rejection of new technology, and in the formation of political opinions~\cite{easley2010networks,acemoglu2011bayesian}. Moreover, social learning also plays a key role in the context of e-commerce and digital society, e.g., in recommendation systems of online stores where users access opinions of previous customers while choosing their products~\cite{rosas2017technological,hsiao2016}. This is also the case in the emergence of viral media contents in various Internet portals, which are based on sequential actions of like or dislike.

A deep understanding of social learning dynamics is crucial for enabling robust platform design against fake news and data falsification, which is an urgent need in our modern networked society. As a matter of fact, digital misinformation was listed by the World Economic Forum (WEF) as one of the main threats to our modern society~\cite{howell2013digital}.

Social learning have been thoughtfully studied since the 90's by researchers from economics and social sciences~\cite{banerjee1992simple,bikhchandani1992theory,bikhchandani1998learning} (for modern reviews see \cite{acemoglu2011bayesian,easley2010networks}). These studies have shown that social learning is driven by two competing mechanisms. In one hand, the well-known \textit{crowd wisdom} improve the decision-making capabilities of agents within large networks, as more information becomes available to latter agents in the decision sequence. The accumulation of social experience can, on the other hand, overload agents and generate \textit{information cascades}, which pushes them to ignore their private knowledge and to adopt a predominant social behaviour. Interestingly, it has been shown that the combination of these two mechanisms can serve to provide network resilience against data falsification attacks~\cite{rosas2017social,rosas2018social}, pointing out promising possibilities for the design of resilient social learning platforms.

Motivated by the benefits that diversity can provide in biological and social systems~\cite{mathiesen2011ecosystems,santos2008social}, in this work we study how social diversity affects the learning rate in a social learning scenario. For this, we consider a network of rational agents that have diverse preferences and prior information, having some similarities to the works reported in~\cite{smith2000pathological,bala2001conformism}. Using a communication theoretical interpretation of this scenario, we show that social diversity is equivalent to additive noise in a communication channel ---which one would expect to be detrimental for the learning process. Surprisingly, our findings show that social diversity can help to avoid information cascades, introducing important improvements in the asymptotic learning performance.

The rest of this article is structured as follows. Sections~\ref{sec:2} introduces the considered social learning scenario, and develops our definition of social diversity. Section~\ref{sec:3} defines information cascades, and characterize theoretically their behaviour with respect to social diversity. Section~\ref{sec:4} presents numerical evaluations that verify the theoretical results, and finally Section~\ref{sec:5} summarizes our main conclusions.

\textbf{Notation}: uppercase letters $X$ are used to denote random variables and lowercase $x$ realizations of them, while boldface letters $\boldsymbol{X}$ and $\boldsymbol{x}$ represent vectors. Also, $\Pw{w}{X=x|Y=y} \coloneqq \P{X=x|Y=y,W=w}$ is used as a shorthand notation.


\section{Social learning model}
\label{sec:2}



\subsection{Preliminaries and basic assumptions}\label{sec:sysa}

Let us consider a social network composed by $N$ agents, who are engaged in a decision-making process. In this process each agent need to make a decision between two options\footnote{Although generalizations for more than two options are possible, we focus in the case of binary decisions for simplifying the presentation.}, which could correspond to a choice between two restaurants, two brands, or two political parties. It is assumed that decisions occur sequentially, and are labeled according to the order in which they take place.

The decision of the $n$-th agent, denoted as $X_n\in\{0,1\}$, is based on two sources of information (see Figure~\ref{fig:sub1}): a \textit{private signal} $S_n\in\mathcal{S}$, which is a continuous or discrete random variable that represents personal information that the $n$-th agent possesses, and \textit{social information} given by the decisions of the previous agents, denoted by $\boldsymbol{X^{n-1}} \coloneqq (X_1,\dots,X_{n-1}) \in\{0,1\}^{n-1}$.
\begin{figure}[!h]
\centering
  \includegraphics[width=6cm]{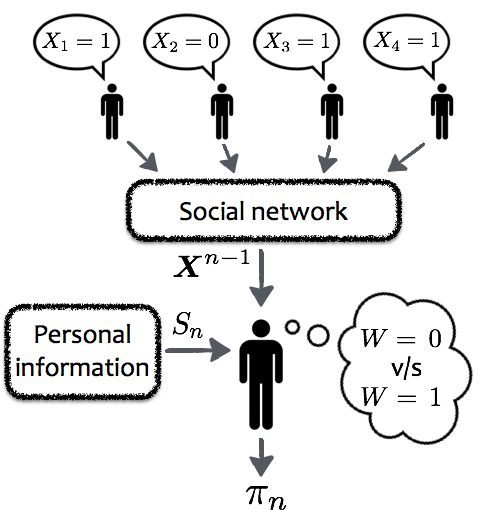}
\caption{\small{A social learning scenario, where an agent needs to make a decision ($\pi_n$) based on personal information coming from a private signal ($S_n$) and social information ($\boldsymbol{X}^{n-1}$) coming from a social network.}}
\label{fig:sub1}
\end{figure}


All the agents are assumed to have equivalent observation capabilities, and therefore the private signals $S_n$ are identically distributed. These signals are affected by environment conditions, which for simplicity are represented by a binary variable $W$. For the sake of tractability, we follow the existent literature in assuming that the private signals $S_n$ are conditionally independent given $W$, leaving other cases for future work. The corresponding conditional probability distributions of $S_n$ given the event $\{W=w\}$ are denoted by $\mu_w$. We further assume that no realization of $S_n$ is capable of completely determining $W$, which is equivalent to the measure theoretic notion of absolute continuity between $\mu_0$ and $\mu_1$~\cite{Loeve1978}. As a consequence of this assumption, the log-likelihood ratio of these two distributions$\mu_1$ and $\mu_0$ is well-defined and given by the logarithm of the corresponding Radon-Nikodym derivative $\Lambda_S(s) = \log \frac{d \mu_1}{d \mu_0} (s) $\footnote{When $S_n$ takes a discrete number of values then $\frac{d \mu_1}{d \mu_0} (s) = \frac{ \P{ S_n=s|W=1}}{ \P{ S_n=s|W=0}}$, while if $S_n$ is a continuous random variable with conditional p.d.f. $p(s|w)$ then $\frac{d \mu_1}{d \mu_0} (s) = \frac{ p(s|w=1) }{ p(s|w=0) }$.}. 

%
%


A \emph{strategy} is a rule for generating a decision $X_n$ based on $S_n=s$ and $\boldsymbol{X}^{n-1}$, i.e. a collection of deterministic or random functions $\pi_n:\mathcal{S}\times \{0,1\}^{n-1}\to \{0,1\}$ such that $X_n = \pi_n(S_n,\boldsymbol{X}^{n-1})$.

\subsection{Bayesian strategy, agents' preferences and prior information}
\label{secA}

Let us assume that the preferences of the $n$-th agent are encoded in an utility function $u_n(x,w)$, which determines the payoff that the agent receives when making the decision $X_n=x$ under the condition $\{W=w\}$. We consider rational agents that follow a \textit{Bayesian strategy}, which seeks to maximize their average payoff given by $\E{ u(\pi_n(S_n,\boldsymbol{X}^{n-1}),W) }$, with $\E{\cdot}$ being the expected value operator. It has been shown that the Bayesian strategy for the $n$-th agent can be expressed succinctly as \cite{rosas2017technological}
\begin{equation}\label{eq:bayes}
\frac{ \P{W = 1|S_n,\boldsymbol{X}^{n-1}} }{ \P{W = 0|S_n,\boldsymbol{X}^{n-1}} }
\mathop{\lessgtr}_{X_n=1}^{X_n=0}
e^{\nu_n}
\enspace,
\end{equation}
where $\nu_n =  \log \frac{  u_n(0,0) - u_n(0,1)  }{  u_n(1,1) - u_n(1,0)  }$ reflects the effect of the cost function. For considering agents with diverse preferences, we assume that $\nu_i$ are independent and identically distributed (i.i.d.) random variables.

Let us further consider the case where the agents have no absolute knowledge about the prior distribution of $W$. Note that because $W$ is binary, its distribution is completely determined by the value of $\P{W=1}$. Following the framework of Bayesian inference~\cite{gelman2014bayesian}, let us consider $\theta_n\in[0,1]$ to be a collection of i.i.d. random variables following a distribution $f_\theta(\theta)$ that reflects the state of knowledge of the agents about $\P{W=1}$. In particular, if the agent has complete knowledge then $f_\theta(\theta)$ is a delta centered in the true value of $\P{W=1}$ and hence $\theta_n=\P{W=1}$ for all $n$, while if agents has no information then $f_\theta(\theta)$ corresponds to an uniform distribution over $[0,1]$.

Noting that $\boldsymbol{X}^{n-1}$ depends only on $(S_1,\dots,S_{n-1})$, and therefore is conditionally independent of $S_n$, a direct application of the Bayes rule on $ \P{W = 1|S_n,\boldsymbol{X}^{n-1}} $ and $ \P{W = 0|S_n,\boldsymbol{X}^{n-1}} $ shows that \eqref{eq:bayes} can be re-written as
\begin{equation}\label{eq:asdasdA3}
\Lambda_S(S_n) + \Lambda_{\boldsymbol{X}^{n-1}}(\boldsymbol{X}^{n-1}) \mathop{\lessgtr}_{X_n=1}^{X_n=0}  \nu_n + \log \frac{\theta_n}{ 1-\theta_n}
\enspace,
\end{equation}
where $\Lambda_S(S_n)$ and $\Lambda_{\boldsymbol{X}^{n-1}}(\boldsymbol{X}^{n-1})$ are the log-likelihood ratios of $S_n$ and $\boldsymbol{X}^{n-1}$, respectively. Note that an efficient method for computing $\tau_n(\boldsymbol{X}^{n-1})$ has been reported in~\cite{rosas2017social}.

\subsection{Communication theoretic interpretation}
\label{sec:3b}

By using an adequate decision labeling, one can consider the event $\{X_n=W\}$ to be more desirable than $\{X_n\neq W\}$, or equivalently, that $u_n(1,1) \geq u_n(1,0)$ and $u_n(0,0) \geq u_n(0,1)$. The Bayesian strategy is, hence, to choose $X_n$ as similar to $W$ as possible using the information provided by $S_n$ and $\boldsymbol{X}^{n-1}$. Therefore, the decisions $\pi_n(S_n,\boldsymbol{X}^{n-1}) = X_n$ can be considered to be noisy estimations of $W$.

To further explore this perspective, let us re-formulate \eqref{eq:asdasdA3} as
\begin{equation}\label{eq:asdasd11}
\Lambda_S(S_n) + \xi_n
\mathop{\lessgtr}_{X_n=1}^{X_n=0}
\tau_n(\boldsymbol{X}^{n-1})
\enspace,
\end{equation}
where $\xi_n:=\log (1-\theta_n)/\theta_n - \nu_n$ and $\tau_n(\boldsymbol{X}^{n-1}) :=  - \Lambda_{\boldsymbol{X}^{n-1}}(\boldsymbol{X}^{n-1})$. The above can be understood as a classic signal decoder within communication theory~\cite[Section IV]{rosas2017technological}, where $\Lambda_S(S_n)$ is the decision signal and $\xi_n$ is additive noise. Moreover, $\tau_n(\boldsymbol{X}^{n-1}$ is a decision threshold that establishes the decoding rule based on a Vonoroi tessellation that divides $\mathbb{R}$ in two semi-open intervals given by $(-\infty, \tau_n(\boldsymbol{X}^{n-1}))$ and $(\tau_n(\boldsymbol{X}^{n-1}), \infty)$.


\section{Avoiding information cascades via noise} 
\label{sec:3}

\subsection{Local and global information cascades}

In general, the decision $\pi_n(S_n,\boldsymbol{X}^{n-1})$ is made based in complementary evidence provided by both $\boldsymbol{X}^{n-1}$ and $S_n$. The $n$-th agent is said to fall into a \emph{local information cascade} when the information conveyed by $S_n$ is not included in the decision-making process due to a dominant influence of $\boldsymbol{X}^{n-1}$. The term ``local'' is used to emphasize that this event is related to the data fusion taking place at an individual agent. The notion of local information cascade is formalized in the following definition, which is based on the notion of conditional mutual information~\cite{cover2012elements}, denoted as $I(\cdot;\cdot|\cdot)$.

\begin{definition}
\label{local}
The social information $\boldsymbol{x}^{n-1}_\text{c} \in \{0,1\}^{n-1}$ generates a \emph{local information cascade} for the $n$-th agent if $I(\pi_n;S_n|\boldsymbol{X}^{n-1} = \boldsymbol{x}^{n-1}_c) = 0$.
\end{definition}

The above condition summarizes two possibilities: either $\pi_n(s,\boldsymbol{x}^{n-1}_c)$ is constant for all values of $s\in\mathcal{S}$, or there is still variability but this variability is conditionally independent of $S_n$ (e.g. in the case of stochastic strategies ---not considered in this work). In both cases, the above definition highlights the fact that the decision $\pi_n$ contains no unique information\footnote{For a rigorous definition of unique information in Markov chains c.f. \cite{rosas2016understanding}.} coming from $S_n$ when a local cascade takes place . 

Next we define \emph{global information cascades}, which are avalanches of local information cascades that affect all the agents after their ignition.
\begin{definition}
\label{global}
The social information vector $\boldsymbol{x}^{n-1}_\text{c} \in \{0,1\}^{n-1}$ triggers a \emph{global information cascade} if $I(\pi_m:S_m|\boldsymbol{X}^{n-1} = \boldsymbol{x}^{n-1}_\text{c}) = 0$ holds for all $m\geq n$.
\end{definition}

The relationship between local and global information cascades is explored in the next section (c.f. Proposition~\ref{teo2}).

\subsection{The effect of social diversity over information cascades}

Let us first introduce $F_w(z) = \Pw{w}{\Lambda_S(S_n)+\xi_n \leq z} $ as a shorthand notation for the cumulative distribution function of $\Lambda_S(S_n)+\xi_n$ conditioned on the event $\{W=w\}$. Note that, thank to the fact that $\Lambda_S(S_1)$ and $\xi_1$ are independent random variables, one can compute $F_w(\cdot)$ as the convolution of their density functions.

\begin{lemma}\label{lemma1}
The conditional statistics of $\pi_n$ given $\boldsymbol{X}^{n-1} $ are defined by
\begin{equation}
\Pw{w}{\pi_n = 0 |\boldsymbol{X}^{n-1}=\boldsymbol{x}^{n-1}} = F_w(\tau_{n}(\boldsymbol{x}^{n-1})) .  \label{eq:tauuuw}
\end{equation}
\end{lemma}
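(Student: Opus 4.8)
The plan is to convert the threshold form of the Bayesian strategy into an explicit statement about the random variable $\Lambda_S(S_n)+\xi_n$, and then to remove the conditioning on the social information by exploiting the independence structure of the model. First I would read off from the decision rule \eqref{eq:asdasd11} that, once $\boldsymbol{X}^{n-1}$ is fixed to a particular realization $\boldsymbol{x}^{n-1}$, the event $\{\pi_n=0\}$ coincides with $\{\Lambda_S(S_n)+\xi_n \leq \tau_n(\boldsymbol{x}^{n-1})\}$. Here the threshold $\tau_n(\boldsymbol{x}^{n-1})=-\Lambda_{\boldsymbol{X}^{n-1}}(\boldsymbol{x}^{n-1})$ has become a fixed real number, and the boundary is assigned to $X_n=0$ so as to match the $\leq$ convention used in the definition of $F_w$.

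The heart of the argument is to show that conditioning on $\{\boldsymbol{X}^{n-1}=\boldsymbol{x}^{n-1}\}$ leaves the conditional law of $\Lambda_S(S_n)+\xi_n$ given $\{W=w\}$ unchanged. To establish this I would invoke the modelling assumptions of Section~\ref{sec:2}: the private signals are conditionally independent given $W$, while the preference and prior parameters $\nu_i$ and $\theta_i$ are i.i.d.\ and, being internal to the agents, independent of the signals and of $W$. Since $\xi_n=\log\frac{1-\theta_n}{\theta_n}-\nu_n$ depends only on $(\theta_n,\nu_n)$, the sum $\Lambda_S(S_n)+\xi_n$ is a function solely of the agent-$n$ variables $(S_n,\theta_n,\nu_n)$. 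On the other hand, the recursion defining the strategy expresses $\boldsymbol{X}^{n-1}$ as a function of the earlier variables $\{(S_i,\theta_i,\nu_i)\}_{i<n}$ and of $W$ only, and never references $(S_n,\theta_n,\nu_n)$. Consequently, conditioned on $\{W=w\}$, the quantity $\Lambda_S(S_n)+\xi_n$ is independent of $\boldsymbol{X}^{n-1}$.

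Putting the two steps together gives
\begin{align*}
\Pw{w}{\pi_n=0 \,|\, \boldsymbol{X}^{n-1}=\boldsymbol{x}^{n-1}}
&= \Pw{w}{\Lambda_S(S_n)+\xi_n \leq \tau_n(\boldsymbol{x}^{n-1}) \,|\, \boldsymbol{X}^{n-1}=\boldsymbol{x}^{n-1}} \\
&= \Pw{w}{\Lambda_S(S_n)+\xi_n \leq \tau_n(\boldsymbol{x}^{n-1})}
= F_w\!\left(\tau_n(\boldsymbol{x}^{n-1})\right),
\end{align*}
where the second equality is the conditional independence just argued and the third is simply the definition of $F_w$.

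I expect the main obstacle to lie in the careful justification of the conditional independence step rather than in any computation. The social information $\boldsymbol{X}^{n-1}$ is generated by a nonlinear recursion that couples all the earlier signals, priors and preferences, so it is tempting to worry that fixing its value constrains $S_n$; the resolution is the structural observation that this recursion is measurable only with respect to the agent-$i$ variables for $i<n$, which are independent of $(S_n,\theta_n,\nu_n)$ once $W$ is fixed. A minor point worth recording is that $F_w$ is a bona fide distribution function, obtainable as the convolution of the laws of $\Lambda_S(S_n)$ and $\xi_n$ as noted just before the lemma, so that evaluating it at the deterministic point $\tau_n(\boldsymbol{x}^{n-1})$ is meaningful.
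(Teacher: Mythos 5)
Your proof is correct and follows essentially the same route as the paper's: both identify the event $\{\pi_n=0\}$, once $\boldsymbol{X}^{n-1}=\boldsymbol{x}^{n-1}$ is fixed, with the threshold event $\{\Lambda_S(S_n)+\xi_n \le \tau_n(\boldsymbol{x}^{n-1})\}$ via \eqref{eq:asdasd11}, and then remove the conditioning by the conditional independence of the agent-$n$ variables from $\boldsymbol{X}^{n-1}$ given $W=w$, concluding by the definition of $F_w$. If anything, your version is slightly more careful than the paper's own, which in its middle step writes the deterministic indicator $\I{\pi_n(s,\boldsymbol{x}^{n-1})=0}$ and thereby glosses over the residual randomness of $\xi_n$ that your argument (via the independence of $(S_n,\theta_n,\nu_n)$ from the earlier agents' variables given $W$) tracks explicitly.
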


\begin{proof}
A direct calculation shows that
\begin{equation}
\Pw{w}{\pi_1(S_1) = 0 } = \Pw{w}{ \Lambda_S(S_1) + \xi_1 < 0 } = F_w(0). \nonumber
\end{equation}
Following a similar rationale, one can find that
\begin{align}
\Pw{w}{\pi_n = 0 |\boldsymbol{X}^{n-1}=\boldsymbol{x}^{n-1}} &= \int_\mathcal{S} 
\Pw{w}{ \pi_n(s,\boldsymbol{x}^{n-1}) = 0 | S_n=s } \mu_w(s) \text{d} s  \nonumber\\
&= \int_\mathcal{S} \I{ \pi_{n} (s,\boldsymbol{x}^{n-1}) = 0 } \mu_w(s) \text{d} s \nonumber\\
&= \Pw{w}{ \Lambda_S(s) +\xi_n< \tau_{n}(\boldsymbol{x}^{n-1}) }\nonumber\\
&= F_w(\tau_{n}(\boldsymbol{x}^{n-1})) \nonumber
\enspace.
\end{align}
Above, the first equality is a consequence of the fact that $S_n$ is conditionally independent of $\boldsymbol{X}^{n-1}$ given $W=w$, while the second equality is a consequence that $\pi_n$ is a deterministic function of $\boldsymbol{X}^{n-1}$ and $S_n$, and hence becomes conditionally independent of $W$.
\end{proof}

Next, using Lemma~\ref{lemma1}, one can show that $\tau_n$ is an effective summary of the information provided by $\boldsymbol{X}^{n-1}$ that is relevant for generating the decision $\pi_n$.
\begin{lemma}\label{lemma2}
The variables $\boldsymbol{X}^{n-1} \rightarrow \tau_n \rightarrow \pi_n$ form a Markov Chain, i.e. $\tau_n$ is a sufficient statistic of $\boldsymbol{X}^{n-1}$ for predicting the decision $\pi_n$.
\end{lemma}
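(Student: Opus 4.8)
The plan is to reduce the assertion to a direct consequence of Lemma~\ref{lemma1}. Recall that for a triple $A \to B \to C$ in which $B = g(A)$ is a deterministic function of $A$, the Markov property $C \perp A \mid B$ is equivalent to the statement that the conditional law of $C$ given $A = a$ depends on $a$ only through $g(a)$. Here $B = \tau_n = \tau_n(\boldsymbol{X}^{n-1})$ is indeed a deterministic function of $A = \boldsymbol{X}^{n-1}$, since it is defined as $-\Lambda_{\boldsymbol{X}^{n-1}}(\boldsymbol{X}^{n-1})$, so the first link of the chain is automatic and the entire task is to verify that the conditional distribution of $\pi_n$ given $\boldsymbol{X}^{n-1} = \boldsymbol{x}^{n-1}$ is a function of $\tau_n(\boldsymbol{x}^{n-1})$ alone.

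First I would note that, because $\pi_n \in \{0,1\}$ is binary, its conditional distribution is completely specified by the single number $\Pw{w}{\pi_n = 0 \mid \boldsymbol{X}^{n-1} = \boldsymbol{x}^{n-1}}$. Then I would invoke Lemma~\ref{lemma1}, which gives this probability explicitly as $F_w(\tau_n(\boldsymbol{x}^{n-1}))$. Since $F_w$ is a fixed function and the only dependence on $\boldsymbol{x}^{n-1}$ enters through $\tau_n(\boldsymbol{x}^{n-1})$, any two realizations $\boldsymbol{x}^{n-1}$ and $\boldsymbol{y}^{n-1}$ satisfying $\tau_n(\boldsymbol{x}^{n-1}) = \tau_n(\boldsymbol{y}^{n-1})$ yield identical conditional laws for $\pi_n$. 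This is exactly the required conditional independence, and hence the Markov chain.

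The one point requiring care is the role of the environment variable $W$, which is conditioned upon throughout Lemma~\ref{lemma1} but does not appear in the statement of the present lemma. I would handle this by observing that $\tau_n = -\Lambda_{\boldsymbol{X}^{n-1}}(\boldsymbol{X}^{n-1})$ is precisely the negative log-likelihood ratio of the social vector, so that the posterior $\P{W = w \mid \boldsymbol{X}^{n-1} = \boldsymbol{x}^{n-1}}$ is itself a function of $\tau_n(\boldsymbol{x}^{n-1})$ and the common prior alone. Averaging the fibrewise identity $\Pw{w}{\pi_n = 0 \mid \boldsymbol{X}^{n-1}=\boldsymbol{x}^{n-1}} = F_w(\tau_n(\boldsymbol{x}^{n-1}))$ against this posterior then shows that the unconditional law of $\pi_n$ given $\boldsymbol{X}^{n-1}$ also depends on $\boldsymbol{X}^{n-1}$ only through $\tau_n$, lifting the result from the $W$-conditional setting to the statement as written. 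I expect this bookkeeping with $W$ to be the only genuine obstacle; the core of the argument is immediate from Lemma~\ref{lemma1}.
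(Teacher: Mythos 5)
Your proposal is correct and follows essentially the same route as the paper: both derive the Markov property directly from Lemma~\ref{lemma1}, observing that the conditional law of $\pi_n$ given $\boldsymbol{X}^{n-1}=\boldsymbol{x}^{n-1}$ equals $F_w(\tau_n(\boldsymbol{x}^{n-1}))$ and hence depends on $\boldsymbol{x}^{n-1}$ only through $\tau_n$. Your additional step lifting the statement from the $W$-conditional setting to the unconditional one (using that the posterior of $W$ given $\boldsymbol{X}^{n-1}$ is itself a function of $\tau_n$, since $\tau_n$ is the negative log-likelihood ratio) is valid and in fact more careful than the paper's proof, which works exclusively with $\mathbb{P}_w\{\cdot\}$ and leaves that point implicit.
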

\begin{proof}
Using \eqref{eq:tauuuw}, one can find that
\begin{equation}
\Pw{w}{\pi_n =0| \tau_n, \boldsymbol{X}^{n-1}} =  F_w(\tau_{n}) = \Pw{w}{\pi_n=0|\tau_n} \label{eq:tauuuw0}
\enspace,
\end{equation}
and therefore the conditional independency of $\pi_n$ and $\boldsymbol{X}^{n-1}$ given $\tau_n$ is clear. 
\end{proof}

We now present a proposition that clarifies the relationship between local and global information cascades. This result extends \cite[Theorem 1]{rosas2017technological} to the current scenario.

\begin{proposition}\label{teo2}
Each local information cascade triggers a global information cascade over the social network.
\end{proposition}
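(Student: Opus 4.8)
The plan is to argue by induction on the agent index $m \geq n$, the key mechanism being that a local cascade leaves the decision threshold unchanged, so that every subsequent agent inherits exactly the same thresholded decision problem and therefore cascades as well. First I would record the equivalent threshold description of a cascade: following the interpretation of Definition~\ref{local} for deterministic strategies, the hypothesis $I(\pi_n; S_n | \boldsymbol{X}^{n-1} = \boldsymbol{x}^{n-1}_c) = 0$ means that $\pi_n$ is (almost surely) a constant $x^\ast \in \{0,1\}$ given $\boldsymbol{x}^{n-1}_c$. By Lemma~\ref{lemma1} this is equivalent to
\begin{equation}
F_0\big(\tau_n(\boldsymbol{x}^{n-1}_c)\big) = F_1\big(\tau_n(\boldsymbol{x}^{n-1}_c)\big) \in \{0,1\} , \nonumber
\end{equation}
with $x^\ast = 0$ corresponding to $\tau_n(\boldsymbol{x}^{n-1}_c)$ lying above $\esssup(\Lambda_S(S_n)+\xi_n)$ under both hypotheses, and $x^\ast = 1$ to the symmetric case.

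Second, I would show that this deterministic decision does not update the social log-likelihood ratio. Factoring the likelihood of $\boldsymbol{X}^n$ into that of $\boldsymbol{X}^{n-1}$ times the conditional factor of $X_n$, and using $\tau_m(\boldsymbol{x}^{m-1}) = -\Lambda_{\boldsymbol{X}^{m-1}}(\boldsymbol{x}^{m-1})$, yields
\begin{equation}
\tau_{n+1}(\boldsymbol{x}^n_c) = \tau_n(\boldsymbol{x}^{n-1}_c) - \log \frac{ \Pw{1}{\pi_n = x^\ast \,|\, \boldsymbol{X}^{n-1} = \boldsymbol{x}^{n-1}_c} }{ \Pw{0}{\pi_n = x^\ast \,|\, \boldsymbol{X}^{n-1} = \boldsymbol{x}^{n-1}_c} } . \nonumber
\end{equation}
Because the cascade makes $\pi_n = x^\ast$ deterministic, both conditional probabilities equal $1$ by Lemma~\ref{lemma1}, the logarithm vanishes, and $\tau_{n+1}(\boldsymbol{x}^n_c) = \tau_n(\boldsymbol{x}^{n-1}_c)$.

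Third, I would close the induction. Since the pairs $(S_m,\xi_m)$ are i.i.d. across agents, the conditional law $F_w$ appearing in Lemma~\ref{lemma1} is the \emph{same} function for every agent (and, by Lemma~\ref{lemma2}, $\tau_m$ is the only feature of the history that matters). Hence the equality of thresholds immediately gives $F_w(\tau_{n+1}(\boldsymbol{x}^n_c)) = F_w(\tau_n(\boldsymbol{x}^{n-1}_c)) \in \{0,1\}$ for both $w$, so agent $n+1$ also cascades and again emits $x^\ast$. Repeating the threshold-invariance step shows $\tau_m(\boldsymbol{x}^{m-1}_c)=\tau_n(\boldsymbol{x}^{n-1}_c)$ for all $m \geq n$, establishing $I(\pi_m; S_m | \boldsymbol{X}^{n-1} = \boldsymbol{x}^{n-1}_c)=0$ for every such $m$, which is exactly Definition~\ref{global}.

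The step I expect to be the main obstacle is reconciling the conditioning used in Definition~\ref{global}, which fixes only the triggering vector $\boldsymbol{x}^{n-1}_c$ and not the full history $\boldsymbol{x}^{m-1}_c$. This is precisely what the determinism of the cascade resolves: each $X_m$ with $m \geq n$ becomes a deterministic function of $\boldsymbol{x}^{n-1}_c$, so conditioning on $\boldsymbol{x}^{n-1}_c$ already pins down the whole trajectory $\boldsymbol{X}^{m-1} = \boldsymbol{x}^{m-1}_c$, and the global-cascade condition for $\pi_m$ coincides with a local cascade at agent $m$. Some care is also needed in the first step to argue that $I(\pi_n;S_n|\boldsymbol{x}^{n-1}_c)=0$ genuinely forces a degenerate decision ($F_w \in \{0,1\}$) rather than merely a $W$-independent one, and to keep the likelihood factor equal to $1$ instead of an indeterminate $0/0$ in the boundary cases.
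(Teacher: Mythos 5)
Your proof is correct and takes essentially the same route as the paper's: both rest on the observation that a cascaded (deterministic) decision contributes zero to the conditional log-likelihood ratio, so the threshold $\tau_m$ stays frozen at $\tau_n(\boldsymbol{x}^{n-1}_c)$ for all $m \geq n$, and Lemma~\ref{lemma1} then shows each subsequent decision is degenerate, i.e.\ a local cascade. If anything, your write-up is more explicit than the paper's about the induction and about why the likelihood factor is $1$ rather than an indeterminate $0/0$.
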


\begin{proof}
Letus first note that
\begin{align}
\tau_{n+1}(\boldsymbol{X}^{n}) - \tau_n(\boldsymbol{X}^{n-1}) =& \Lambda_{\boldsymbol{X}^{n-1}}( \boldsymbol{X}^{n-1} ) - \Lambda_{\boldsymbol{X}^{n}}(\boldsymbol{X}^{n}) \nonumber\\
=&  - \Lambda_{X_n|\boldsymbol{X}^{n-1}}(X_n | \boldsymbol{X}^{n-1}) \label{eq:loc_cond}
\enspace,
\end{align}
where the conditional log-likelihood is given by
\begin{equation}
\Lambda_{X_n|\boldsymbol{X}^{n-1}}(X_n|\boldsymbol{X}^{n-1} )= \log \frac{ \Pw{1}{ X_n| \boldsymbol{X}^{n-1} } }{ \Pw{0}{X_n | \boldsymbol{X}^{n-1} }}.
\nonumber
\end{equation}

Let us consider $\boldsymbol{x}^{n-1}_\text{c}\in\{0,1\}^{n-1}$ such that it produce a local cascade in the $n$-th node. As Bayesian strategies are deterministic, local information cascades corresponds to the events where $\pi_n$ is fully determined by $\boldsymbol{X}^{n-1}$, i.e. when the probability of the event $\{\pi_n =0| \boldsymbol{X}^{n-1} = \boldsymbol{x}^{n-1}_\text{c}\} = \{X_n =0| \boldsymbol{X}^{n-1} = \boldsymbol{x}^{n-1}_\text{c}\}$ is either 0 or 1. This, in turn, implies that $\Lambda_{X_n|\boldsymbol{X}^{n-1}}(X_n|\boldsymbol{x}^{n-1}_\text{c} ) = 0$ almost surely, and therefore, conditioned on the event $\{\boldsymbol{X}^{n-1}=\boldsymbol{x}^{n-1}_\text{c}\}$ one has that
\begin{equation}
\tau_{m}(\boldsymbol{X}^{m}) = \tau_n(\boldsymbol{x}^{n-1}_\text{c}) \qquad \text{for all }m\geq n.
\end{equation}

Finally, by using \eqref{eq:tauuuw}, one can show that 
\begin{equation}
\Pw{w}{\pi_m=0|\boldsymbol{X}^m, \boldsymbol{X}^{n-1} = \boldsymbol{x}^{n-1}_\text{c}} = F_w(\tau_m(\boldsymbol{X}^m)) = F_w(\boldsymbol{x}^{n-1}_\text{c})
\end{equation}
Therefore, $\mathbb{P}_w\{\pi_m=0|\boldsymbol{X}^{n-1} = \boldsymbol{x}^{n-1}_\text{c},X_{n},\dots,X_m\} $ is also either zero or one, showing that the $m$-th agent also is affected by a local information cascade.

\end{proof}

Let us now introduce $U_s = \esssup \Lambda_S(S_n)$, $U_{\xi_n} = \esssup \xi_n$, $L_s = \essinf \Lambda_S(S_n)$ and  $L_{\xi_n} = \essinf \xi_n$ as shorthand notations for the essential supermum and infimum of $\Lambda_S(S_n)$ and $\xi_n$~\footnote{The essential supremum is the smallest upper bound of a random variable that holds almost surely, being the natural measure-theoretic extension of the notion of supremum \cite{dieudonne1976}.}. In particular, $U_s$ and $L_s$ correspond to the signals within $\mathcal{S}$ that most strongly support the hypothesis $\{W=1\}$ and $\{W=0\}$, respectively. If one of these quantities diverge, this implies that there are signals $s\in\mathcal{S}$ that provide overwhelming evidence in favour of one of the competing hypotheses. On the other hand, if $U_s$ and $L_s$ are both finite then the agents are said to have \textit{bounded beliefs}~\cite{acemoglu2011bayesian}. Similarly, when both $U_\xi$ and $L_\xi$ are finite we say agents have \textit{bounded diversity}, which implies that the diversity among priors and cost functions is not too high. Using this notions, we present the main result of this work.

\begin{theorem}\label{teo1}
Local information cascades cannot take place when agents have either unbounded beliefs or unbounded diversity.
\end{theorem}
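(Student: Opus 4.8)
The plan is to use Lemma~\ref{lemma1} to reduce the occurrence of a local cascade to a purely geometric condition on the threshold $\tau_n$, and then to show that under unbounded beliefs or unbounded diversity this condition can never be met. Since Bayesian strategies are deterministic, the argument in the proof of Proposition~\ref{teo2} shows that a local cascade at node $n$ triggered by $\boldsymbol{x}^{n-1}_\text{c}$ is equivalent to $\Pw{w}{\pi_n=0|\boldsymbol{X}^{n-1}=\boldsymbol{x}^{n-1}_\text{c}}\in\{0,1\}$, which by \eqref{eq:tauuuw} is equivalent to $F_w(\tau_n(\boldsymbol{x}^{n-1}_\text{c}))\in\{0,1\}$. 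I would therefore first restate the cascade condition in these terms, and then characterise exactly when the value of a cumulative distribution function can equal $0$ or $1$.

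The key step is to locate these values relative to the support of the decision variable $Y:=\Lambda_S(S_n)+\xi_n$. By definition of the essential supremum and infimum, $F_w(z)=1$ forces $z\geq \esssup Y$ while $F_w(z)=0$ forces $z\leq \essinf Y$, so a local cascade requires the threshold to escape the open interval $(\essinf Y,\esssup Y)$. Next I would invoke the independence of $\Lambda_S(S_n)$ and $\xi_n$ --- the same property used to express $F_w$ as a convolution --- to obtain the additivity of the essential extrema, namely $\esssup Y = U_s+U_\xi$ and $\essinf Y = L_s+L_\xi$. Since $\xi_n$ and $\Lambda_S(S_n)$ are genuine real random variables one always has $U_\xi>-\infty$, $L_\xi<+\infty$ and similarly for the signal, so that unbounded beliefs ($U_s=+\infty$ and $L_s=-\infty$) or unbounded diversity ($U_\xi=+\infty$ and $L_\xi=-\infty$) each force $\esssup Y=+\infty$ and $\essinf Y=-\infty$; the relevant interval then becomes all of $\mathbb{R}$.

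It remains only to verify that $\tau_n(\boldsymbol{x}^{n-1}_\text{c})$ is finite for every realisable history, since otherwise the threshold could sit at the now-infinite boundary. I would establish this by induction on $n$: the base case $\tau_1=0$ is immediate, and if $\tau_n$ is finite then the previous paragraph gives $F_w(\tau_n)\in(0,1)$ for both $w$, so each decision $X_n$ has positive conditional probability under both hypotheses and the conditional log-likelihood $\Lambda_{X_n|\boldsymbol{X}^{n-1}}$ appearing in \eqref{eq:loc_cond} is finite; by the recursion $\tau_{n+1}=\tau_n-\Lambda_{X_n|\boldsymbol{X}^{n-1}}$ the threshold stays finite along every realisable path. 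Consequently $F_w(\tau_n)\in(0,1)$ always, no local cascade can occur, and the claim follows. I expect the main obstacle to be precisely this finiteness argument together with the additivity of the essential extrema: the latter needs the independence assumption to exclude the degenerate case in which the support of the sum is strictly smaller than the sum of the individual supports, and the former is what genuinely prevents the threshold from reaching an infinite boundary and hence carries the probabilistic content of the theorem.
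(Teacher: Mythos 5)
Your proposal is correct and follows essentially the same route as the paper's proof: via Lemma~\ref{lemma1}, a local cascade is equivalent to $F_w(\tau_n(\boldsymbol{x}^{n-1}_\text{c}))\in\{0,1\}$, independence of $\Lambda_S(S_n)$ and $\xi_n$ gives additivity of the essential extrema, and unbounded beliefs or unbounded diversity then make the non-cascade interval all of $\mathbb{R}$. Your one genuine addition is the induction showing that $\tau_n$ stays finite along every realizable history; the paper tacitly assumes this (its final step only excludes finite thresholds lying outside $[L_{\text{total}},U_{\text{total}}]$), so your extra step closes a small gap in the same argument rather than taking a different path.
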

\begin{proof}
Note first that, due to the independency between $\Lambda_S(S_n)$ and $\xi_n$, one has that
\begin{align}
U_{\text{total}} & \coloneqq \esssup \left\{ \Lambda_S(S_n) + \xi_n \right\}= U_s+U_\xi,\\
 L_{\text{total}} & \coloneqq \essinf \left\{ \Lambda_S(S_n) + \xi_n \right\}= L_s+L_\xi.
\end{align}
From this, $U_{\text{total}}$ and $L_{\text{total}}$ are unbounded if and only the agents have unbounded beliefs or unbounded diversity.

From Lemma~\ref {lemma1}, it is clear that $\pi_n$ is fully determined by $\boldsymbol{x}^{n-1} \in \{0,1\}^{n-1}$ if and only if $\tau_n(\boldsymbol{x}^{n-1})$ is such that $F_w(\tau_n(\boldsymbol{x}^{n-1}))$ is zero or 1 for $w\in\{0,1\}$. Because of the definition of $F_w$, this happens whenever $\tau_n(\boldsymbol{X}^{n-1}) \notin [L_{\text{total}},U_{\text{total}}]$, proving the Proposition.
\end{proof}

Information cascades are known to degrade the learning process, preventing the error rate of the learning process $\P{\pi_n \neq W}$ from converging to zero when the social network grows~\cite{rosas2017technological}. Therefore, Theorem~\ref{teo1} reveals a non-intuitive value of social diversity, as it can safeguard social learning from information cascades. In this way, social diversity can guarantee perfect social learning to happen asymptotically, even when agents have bounded beliefs and are hence prone to herd behaviour~\cite{rosas2017technological}. However, this benefit usually comes at the price of a slower convergence, which can be detrimental for the first agents of the decision sequence. This trade-off is explored in the next section.


\section{Proof of concept}
\label{sec:4}

For illustrating the findings presented in Section~\ref{sec:3}, this section presents results of simulations of a social network following the model presented in Section~\ref{sec:2}. We considered two scenarios: one where $S_n$ are binary variables that follow a binnary symmetric channel with $\P{S_n\neq w | W=w}=1/4$, and other where $S_n$ given $\{W=w\}$ are Gaussian variables $N(\mu_w,\sigma^2)$ with $\mu_w = (-1)^{1-w}$ and $\sigma^2 = 4$. These two signal models were choosen because it is known that agent following binary signals are strongly affected by information cascades, while agents following Gaussian signals are not affected by them (for further details about these scenarios c.f. \cite[Section VI]{rosas2017technological}). For simpicity, the social diversity has been modeled considering $\xi_n$ to be i.i.d. following a Gaussian distribution $N(0,\sigma_\xi^2)$, and hence $\sigma_\xi^2$ quantifies the ``diverstiy strength'' of the social network. Each scenario was simulated $10^5$ realizations, and the statistics of the learning error rate, defined as $\P{\pi_n\neq W}$ were computed afterwards.

In agreement with Theorem~\ref{teo1}, results confirm that social learning processes can be benefited by social diversity. Figure~\ref{fig:sub21} shows how the results of a collective inference carried out by agents driven by binary private signals achieve better performance asymptotically. However, for some values of social diversity the learning rate can be rather slow, making social learning not useful for small social networks. In all the studied cases it was seen that social diversity degrades the performance of the first agents in the decision sequence; however an adequate level of diversity can introduce a fast learning rate. In contrast, as illustrated in Figure~\ref{fig:sub21} for agents following Gaussian signals, social diversity was found to be always detrimental in cases where agents have unbounded beliefs. This confirms the fact that the benefits of social diversity is to avoid information cascades, which are the main cause of poor performance of social learning in large networks~\cite{rosas2017technological}. 
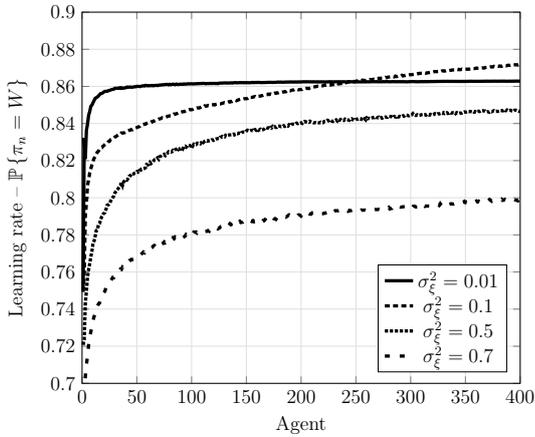
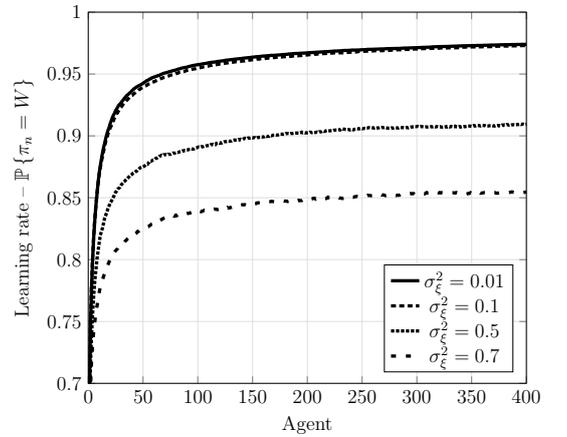
\begin{figure}[h]
\centering
\begin{subfigure}{.4\textwidth}
 \centering
\begin{tikzpicture}[scale=0.62]
  \begin{axis}
[
 mark size=0.7pt,
 legend pos=south east,
 cycle list name = nano line style,
 xlabel={Agent},
 xmax = 400,
 xmin =0,
 ylabel={Learning rate -- $\P{\pi_n = W}$},
 ymin =0.7,
 ymax = 0.9,
 grid=both,
 minor grid style={gray!25},
 major grid style={gray!25},
 width=1.6*\linewidth,
 height=1.4*\linewidth,
no marks
] 
\addplot table[x=agent, y expr=1-\thisrow{0.01}, col sep=comma]{./bernoulli_include.csv} ; 
\addlegendentry{$\sigma_\xi^2 = 0.01$} ; 
\addplot table[x=agent, y expr=1-\thisrow{0.1}, col sep=comma]{./bernoulli_include.csv} ; 
\addlegendentry{$\sigma_\xi^2 = 0.1$} ; 
\addplot table[x=agent, y expr=1-\thisrow{0.5}, col sep=comma]{./bernoulli_include.csv} ;
\addlegendentry{$\sigma_\xi^2 = 0.5$} ;
\addplot table[x=agent, y expr=1-\thisrow{0.7}, col sep=comma]{./bernoulli_include.csv} ;
\addlegendentry{$\sigma_\xi^2 = 0.7$} ;
  \end{axis}
\end{tikzpicture}
  \caption{\small{Binary private signals}}
  \label{fig:sub21}
\end{subfigure}
\hfill
\begin{subfigure}{.4\textwidth}
\begin{tikzpicture}[scale=0.62]
  \begin{axis}
[
 mark size=0.7pt,
 legend pos=south east,
 cycle list name = nano line style,
 xlabel={Agent},
 xmax = 400,
 xmin =0,
 ylabel={Learning rate -- $\P{\pi_n = W}$},
 ymin =0.7,
 ymax = 1,
 grid=both,
 minor grid style={gray!25},
 major grid style={gray!25},
 width=1.6\linewidth,
 height=1.4\linewidth,
 tick label style={/pgf/number format/fixed},
 no marks
] 
\addplot table[x=agent, y expr=1-\thisrow{0.01}, col sep=comma]{./gaussians_include.csv} ; 
\addlegendentry{$\sigma_\xi^2 = 0.01$} ; 
\addplot table[x=agent, y expr=1-\thisrow{0.1}, col sep=comma]{./gaussians_include.csv} ; 
\addlegendentry{$\sigma_\xi^2 = 0.1$} ; 
\addplot table[x=agent, y expr=1-\thisrow{0.5}, col sep=comma]{./gaussians_include.csv} ;
\addlegendentry{$\sigma_\xi^2 = 0.5$} ;
\addplot table[x=agent, y expr=1-\thisrow{0.7}, col sep=comma]{./gaussians_include.csv} ;
\addlegendentry{$\sigma_\xi^2 = 0.7$} ;
  \end{axis}
\end{tikzpicture}
  \caption{\small{Gaussian private signals}}
  \label{fig:sub22}
\end{subfigure}
\caption{\small{Social learning rate for agents following binary or Gaussian private signals, under various levels of social diversity ($\sigma_\xi^2$). Social networks that follow binary signals are vulnerable to information cascades, and hence a non-zero social diversity improve their asymptotic learning rate. In contrast, social networks that follow Gaussian signals are inmune to information cascades, and hence social diversity have a purely detrimental effect.}}
\label{fig:test}
\end{figure}

The different effect that social diversity has over agents located at different positions in the inference process is further illustrated by Figure~\ref{fig:3}. We found that, for each agent, there exists an optimal level of social diversity that reduces the effect of information cascades without introducing too much noise. Agents located in the first places of the decision sequence are always affected negatively by social diversity, and hence for them is optimal to have $\sigma_\xi^2=0$.
\begin{figure}[h]
  \centering
\begin{tikzpicture}[scale=0.62]
  \begin{axis}
[
 mark size=0.7pt,
 legend pos=south west,
 cycle list name = nano line style,
 xlabel={Social diversity $\sigma_\xi^2$},
 xmax = 0.45,
 xmin =0.05,
 ylabel={Learning rate -- $\P{\pi_n = W}$},
 ymin =0.75,
 ymax = 0.91,
 grid=both,
 minor grid style={gray!25},
 major grid style={gray!25},
 width=0.7\linewidth,
 height=0.55\linewidth,
 tick label style={/pgf/number format/fixed},
 no marks
] 
\addplot table[x=agent, y expr=1-\thisrow{10}, col sep=comma]{./bernoulli_fixed_selected.csv} ; 
\addlegendentry{10-th agent} ; 
\addplot table[x=agent, y expr=1-\thisrow{100}, col sep=comma]{./bernoulli_fixed_selected.csv} ; 
\addlegendentry{100-th agent} ; 
\addplot table[x=agent, y expr=1-\thisrow{400}, col sep=comma]{./bernoulli_fixed_selected.csv} ;
\addlegendentry{400-th agent} ;
\addplot table[x=agent, y expr=1-\thisrow{800}, col sep=comma]{./bernoulli_fixed_selected.csv} ;
\addlegendentry{800-th agent} ;
  \end{axis}
\end{tikzpicture}
\caption{An optimal level of social diversity exist that can improve the social learning performance of agents located late in the inference process. However desirable, this better performance of late agents comes at the expense of a detrimental effect to the first agents.}
\label{fig:3}
\end{figure}
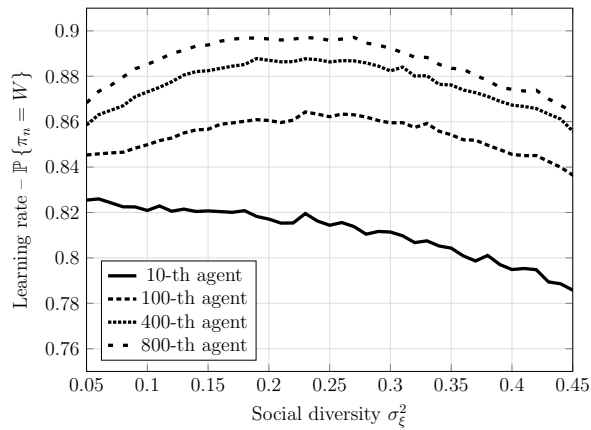


\section{Conclusion}
\label{sec:5}

This paper aims to undestand how social learning is affected when it is pursued by a diverse population. Our scenario considered rational agents with heterogeneous preferences, as encoded by their utility functions, and diverse prior information about the target variable. A communication theoretic analysis showed that this kind of social diversity is equivalent to additive noise in a communication channel. However, it was found that an unbounded social diversity prevent information cascades and, hence, introduces important improvements into the asymptotic social learning rate that can be achieved by a population. Social learning is, therefore, one of those rare cases where noise can improve the overall performance.

To understand how can noise be beneficial, let us point out that rational social agents maximize their individual performance while ignoring the consequences of their actions on the aggregated behaviour. This selfish quality of the agent's behavior makes their actions locally optimal while being globaly suboptimal. In this context, the heterogeneity introduced by social diversity makes the decisions of each agent less informative to others. This generates a reduced social pressure that, in turn, prevents information cascades and herd behaviour, introducing great improvements in the asymptotic social learning performance.

It is to be noted that the benefits of social diversity are only experienced by agents that are prone to information cascades. Therefore, social diversity is not beneficial, e.g., for agents with unbounded beliefs. However, in most applications agent's beliefs are bounded, either because their signals information content is limited or because the signals themselves are bounded. The latter is the case in most engineering applications, e.g. in the scenario studied in \cite{rosas2017social}.

Finally, it is important to remark that social diversity provides benefits to the latter agents in the decision sequence, while degrading the performance of the first agents. Therefore, social diversity might in general be detrimental for the performance of social learning in small networks.


\section*{Acknowledgements}

Fernando Rosas is supported by the European Union's H2020 research and innovation programme, under the Marie Sk\l{}odowska-Curie grant agreement No. 702981.



\end{document}